\newtheorem{theorem}{Theorem}[section]
\newtheorem{lemma}[theorem]{Lemma}
\theoremstyle{definition}
\newtheorem{definition}{Definition}
\title{\vspace{16pt} Modelling and Study of $\tau$ , Peak and Effective 
Diameter in Temporal Networks}
\author[1]{Zahra Farahi\thanks{zhr.farahi@ut.ac.ir}}
\author[1]{Ali Kamandi\thanks{Corresponding author: Ali Kamandi. kamandi@ut.ac.ir }}
\affil[1]{\small School of Engineering Science , College of Engineering, University of Tehran, Tehran, Iran}
\author[1]{Ali Moeini\thanks{moeini@ut.ac.ir}}
\begin{document}

\maketitle
\thispagestyle{first}
\setlength{\parindent}{0pt}
\begin{tabular}{p{285pt}p{5pt}p{160pt} l c l }
\hline
\\
\textbf{ABSTRACT}  & & \textbf{ARTICLE INFO}\\  \cline{1-1}  \cline{3-1}
\vspace{0.2cm}

Understanding how information, diseases, or influence spread across networks is a fundamental challenge in complex systems. While network diameter has been extensively studied in static networks, its definition and behavior in temporal networks remain underexplored due to their dynamic nature. In this study, we present a formal mathematical framework for analyzing diameter in temporal networks and introduce three time-aware metrics: {Effective Diameter} ($\backsim$$\mathcal{D}$), {Peak Diameter} ($* \mathcal{D}$), and {$\tau$-Diameter} ($\tau \mathcal{D}$), each capturing distinct temporal aspects of connectivity and diffusion.

Our approach combines theoretical analysis with empirical validation using four real-world datasets: high school, hospital, conference, and workplace contact networks. We simulate flow propagation on temporal networks and compare the observed diameters with the proposed theoretical Equations. Across all datasets, our model demonstrates high accuracy, with low RMSE and absolute error values. Furthermore, we observe that the effective diameter decreases with increasing average degree and increases with network size. The results also show that $\tau \mathcal{D}$ and $* \mathcal{D}$ are more sensitive to node removal, highlighting their relevance for applications such as epidemic modeling.

By bridging formal modeling and empirical data, our framework offers new insights into the temporal dynamics of networked systems and provides tools for assessing robustness, controlling information spread, and optimizing interventions in time-sensitive environments.

 & & \vspace{0.2cm}
\begin {tabular}{p{160pt}l}
\textit{Article history:}  \\
\small Research paper\\
\small Received XX, XXX XXXX\\
\small Accepted XX, XXX XXXX \\
\small Available online XX, XXX XXXX\\
\end{tabular}
\\
\cline{1-1}
\vspace{0.2cm} \small \textit{Keyword:} Temporal networks, Scale-free, Network Diameter, Shortest path. & & \\
\end{tabular}

\setlength{\parindent}{0pt}

\line(1,0){485}

AMS subject Classification: 05C82, 68R10, 92D30.

\section{Introduction}

Complex networks have emerged as a cornerstone of modern research, offering powerful frameworks for analyzing diverse systems, from biological and social interactions to technological and information networks. These networks capture the intricate relationships between entities, helping researchers understand the underlying dynamics of connectivity, diffusion, and robustness. Among the many metrics used to characterize complex networks, diameter plays a fundamental role, especially in epidemic modeling, network security, and information spread. 

The study of network diameter dates back to classical graph theory, where early approaches relied on breadth-first search (BFS) to determine the longest shortest path between nodes \cite{64newman2001scientific}. Over time, alternative methods, such as random walk algorithms \cite{63lee2008random} and spectral approaches using adjacency matrices and eigenvalues \cite{5cvetkovie1979spectru, 6chung1989diameters, 7delorme1991diameter}, were introduced to refine these calculations. While these techniques have provided valuable insights, they primarily focus on static networks, where connections remain fixed over time. However, in real-world systems, most networks are inherently temporal which connections appear, disappear, and evolve dynamically. 

One of the most striking properties of real-world networks is their scale-free structure, where the majority of nodes have few connections, but a small number of highly connected nodes play a disproportionate role in connectivity \cite{60dorogovtsev2008critical}. This structure contributes to the small-world effect, where distances between nodes grow logarithmically rather than linearly as network size increases \cite{59samoylenko2023there}. These findings have profound implications for the spread of diseases, information, and even cyber threats, making the study of diameter crucial for designing efficient intervention strategies. 

By understanding the diameter of a network, we can predict how quickly a virus spreads and use this information to design control actions such as quarantine and isolation. During the COVID-19 pandemic, researchers leveraged network analysis to model the speed of transmission based on shortest path dynamics \cite{38sun2020did, 51jo2021social}. Similarly, in the realm of social influence and misinformation, identifying key spreaders, whether through random walk-based influence detection \cite{14zhao2022random}, multi-local dimension analysis \cite{34wen2020vital}, or k-shell decomposition \cite{40wang2020identifying}, has become essential for combating misinformation and optimizing content dissemination. 

Despite extensive research on static networks, the field has largely overlooked a critical reality: real-world networks are not static, but temporal. In many systems, from communication and transportation networks to social interactions, connections fluctuate over time. This realization has led to the emergence of temporal networks, where the timing and duration of connections play a crucial role in determining network properties \cite{52chen2020time, 54zhang2019flow, 55song2020spatial, 56tang2020predictability}. 

The transition from static to temporal networks raises new challenges. In static networks, diameter is well-defined as the longest shortest path between any two nodes. However, in temporal networks, where edges appear and disappear over time, this definition no longer holds. A node may be reachable from another only at certain times, meaning that connectivity is not just a function of network topology, but also of temporal dynamics. 

To address this complexity, researchers have proposed new frameworks. The stochastic shortest path model introduced by Andreatta et al. accounts for random variations in edge weights and connection durations \cite{4andreatta1988stochastic}. More recently, Pedreschi et al. introduced the concept of the temporal rich-club phenomenon, showing that highly connected nodes maintain frequent, recurring interactions, which significantly influence network dynamics \cite{62pedreschi2022temporal}. However, despite these advances, the diameter of temporal networks remains largely unexplored. 

Also, Smith and Doe provide a comprehensive review of the dynamics of social networks, emphasizing the critical role that time plays in the spread of information and social influence \cite{1smith2024temporal}. This understanding is further enriched by Kim and Chen, who introduce a discrete-time framework for modeling epidemic spread, highlighting the impact of time-varying connections on infectious disease dynamics \cite{2kim2024modeling}. Similarly, Garcia and Thompson explore how these temporal dynamics can enhance cybersecurity protocols, proposing adaptive measures that respond to changing network structures \cite{3garcia2025leveraging}. Additionally, Alvarez and Patel investigate resilience in temporal transportation networks, demonstrating how varying connectivity affects operational efficiency and reliability \cite{4alvarez2025understanding}. Finally, Li and Kumar delve into the interplay between network structure and temporal patterns, focusing on influencer dynamics and their implications for marketing strategies \cite{5li2024exploring}. Together, these studies contribute to a nuanced understanding of temporal networks and underscore the necessity of accounting for temporal factors in network analysis.

This study presents a structured method for measuring diameter in temporal networks by focusing on how connectivity changes over time. We define three key measures, effective diameter, peak diameter, and $\tau$-diameter, each reflecting different characteristics of temporal paths and interactions.

To demonstrate the practical value of our method, we analyze four real-world datasets involving high school interactions, hospital contact networks, workplace communications, and conference settings. Through this analysis, we show how temporal features influence network behavior and connectivity patterns.

By combining theoretical modeling with real data, the study offers a clearer picture of how time-dependent interactions shape the structure and flow of information in complex systems. These insights can support a wide range of applications, from controlling the spread of diseases and enhancing cybersecurity to improving transportation systems and understanding social behavior.

\section{Proposed Approach: Modeling Diameter in Temporal Networks}

This section introduces our novel approach to understanding and quantifying the diameter in temporal networks. Unlike traditional static analyses, our method explicitly accounts for the time-dependent nature of connections.

\subsection{Theoretical Framework and New Definitions}

In static networks, a path is defined as a sequence of connected edges, each sharing a common node with the previous one. In temporal networks, however, both the sequence of edges and their chronological order are important.

To provide a clearer understanding of temporal networks, we define a key concept: $flow$. Flow starts from a source node and spreads to its neighbors at each time step ($t$) based on the Breadth-First Search (BFS) algorithm, but only if the connections to those neighbors are active at that moment.

\begin{definition}
A temporal path between nodes $v_i$ and $v_j$, denoted as $\mathcal{P}(i, j)$, is a sequence of edges arranged in chronological order based on their active times. A flow starting from $v_i$ can reach $v_j$ if there exists such a path.

\begin{equation}\label{tpatheq}
  \mathcal{P}(i, j) =\{(v_i ,v_{k1} ,t_1 ), (v_{k_1} ,v_{k2} ,t_2 ), ... , (v_{k_m} ,v_{kj} ,t_m )\}
\end{equation}
where $t_1<t_2<...<t_m$, ensuring that the edges follow a chronological order.
\end{definition}


  \begin{figure}
    \centering
    \begin{subfigure}[b]{0.25\textwidth}
        \centering
        \includegraphics[width=1.2\textwidth]{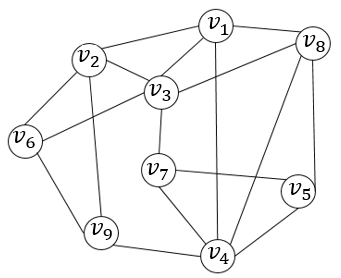}
        \caption{}
        \label{sampleNetworkfig}
    \end{subfigure}
    \hfil
    \begin{subfigure}[b]{0.55\textwidth}
        \centering
        \includegraphics[width=1.2\textwidth]{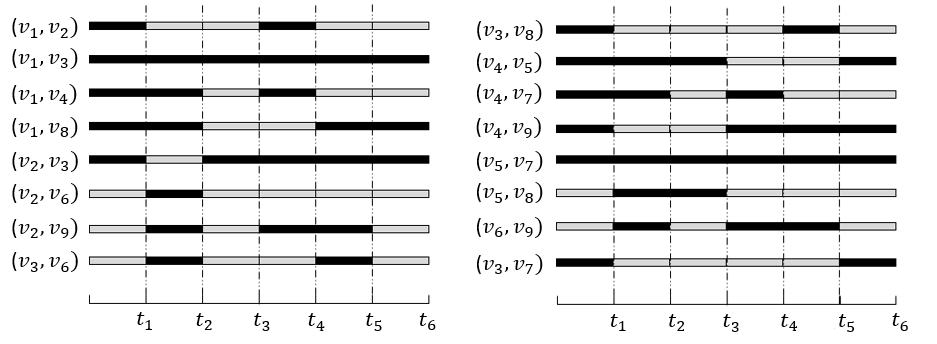}
        \caption{}
        \label{connectiontimefig}
    \end{subfigure}
    \caption{A nine-node network with changing connections over time. (a) Static view, (b) Temporal connections at each time step.}
    \label{samplefig}
  \end{figure}

\begin{definition}
For any node $v_i$, the reachable set $\mathcal{R}_i$ includes all nodes $v_j$ that can be reached from $v_i$ through a temporal path $\mathcal{P}(i, j)$ within $T$ time steps.
\begin{equation}\label{rseteq}
  \mathcal{R}_i =\{ v_j |  \exists\mathcal{P}(i, j) \}
\end{equation}

\end{definition}
\label{def:reachable_Set}
Figure \ref{sampleNetworkfig} shows a network with 9 nodes, while Figure \ref{connectiontimefig} illustrates the temporal connections between nodes, where the connections change over six time steps ($\{t_1,...,t_6\}$).

Figure  \ref{sreacheInTimefig} shows the flow path starting from node $v_6$ in Figure \ref{samplefig}. As seen in the figure, the last node visited by the flow is $v_5$, while $v_7$  is not visited. Therefore, the reachable set of $v_6$ includes all nodes that the flow can reach, except for $v_7$, which remains unreachable.

 Applying Definition \ref{def:reachable_Set}, we obtain:
  \begin{itemize}
    \item $\mathcal{R}_i(t_0) = \{v_6\}$
    \item $\mathcal{R}_i(t_1) = \{v_6\}$
    \item $\mathcal{R}_i(t_2) = \{v_2, v_3, v_6\}$
    \item $\mathcal{R}_i(t_3) = \{v_1, v_2, v_3, v_6\}$
    \item $\mathcal{R}_i(t_4) = \{v_1, v_2, v_3, v_4, v_6,v_9\}$
    \item $\mathcal{R}_i(t_5) = \{v_1, v_2, v_3, v_4, v_6, v_8,v_9\}$
    \item $\mathcal{R}_i(t_6) = \{v_1, v_2, v_3, v_4, v_5, v_6, v_8, v_9\}$
  \end{itemize}
  
Thus, after six time steps, the reachable set of node $v_6$ is $ \{v_1, v_2, v_3, v_4, v_5, v_6, v_8, v_9\}$.

\begin{figure}
  \centering
  \includegraphics[width=7cm]{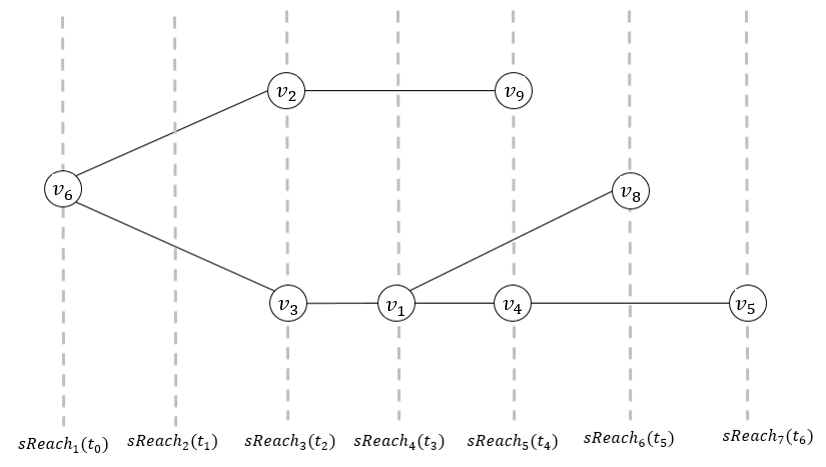}
  \caption{Flow propagation from node $v_6$ in the network of Figure \ref{connectiontimefig}, illustrating the reachable set at each time step.}
  \label{sreacheInTimefig}
\end{figure}

In a temporal random network, the neighbors of a node $v_i$ change over time. At each time step $t$ any two nodes are connected with probability $\hat{p}$. This probability is defined based on the active time $(\zeta)$ and the total observation period $({T})$ as follows:

\begin{equation}\label{peq}
  \hat{p} = \frac{\zeta}{{T}}
\end{equation}

\begin{definition}
The effective degree in temporal random networks is defined with respect to the degree in static random networks as follows:
\begin{equation}\label{eq_eff_degree}
    \langle \hat{k} \rangle = \langle k \rangle \times \frac{\zeta}{{T}}
\end{equation}
where $\langle k \rangle$ represents the average degree \cite{28barabasi2013network}.
\end{definition}

\begin{definition}
  \textbf{Effective Diameter} ($\backsim$$\mathcal{D}$): represents the maximum number of time steps required for a flow to reach all reachable nodes in the network. If the network is disconnected, the effective diameter is determined by the longest temporal path within the largest connected components, considering parallel flow processes occurring in different clusters. We define the effective diameter as:

  \begin{equation}\label{eq_eff_dim}
   \backsim\mathcal{D} = \max_{v_i \in V} \max_{v_j \in \mathcal{R}_i} d_T(v_i, v_j)
\end{equation}
  where \( d_T(v_i, v_j) \) is the shortest temporal path length (in time steps) from \( v_i \) to \( v_j \).
\end{definition}

\begin{definition}
  \textbf{Peak Diameter} ($*$$\mathcal{D}$):  is the time step at which the maximum number of nodes have been reached by a flow originating from any node. This metric captures the point in time when the network is most connected in terms of information spread. Formally, we express it as:  
     \begin{equation}\label{eq_peak_dim}
   * \mathcal{D} = \arg\max_{t \in [t,T]} |\mathcal{R}_t|
\end{equation}

\end{definition}

\begin{definition}
  \textbf{$\tau$-Diameter} ($\tau\mathcal{D}$): is the time step at which one-third of the nodes in the network have been reached by a flow. It represents the point in time when a significant portion of the network has been covered by the flow.  
  
    \begin{equation}\label{eq_ta_dim}
  \tau D = \min_{t \in [1,T]} \left\{ t \mid \left| \mathcal{R}_t \right| \geq \frac{1}{3} |V| \right\}
\end{equation}

\end{definition}

Based on previous definitions, the diameter of a temporal network corresponds to the time required to visit all nodes. If $N$ is the total number of nodes in the network and at each time step a fraction of the nodes, given by $di/dt$, are visited, then we have:

\begin{equation}\label{didt}
  \frac{di}{dt} = \langle k \rangle (N - i) i
\end{equation}

where $(N - i)$ represents the number of unvisited nodes at time $t$, and $i$ is the cumulative number of visited nodes. The time step at which the rate of new visits $di/dt$ reaches its peak is what we defined as the $peak$ $diameter$.

Additionally, the parameter $\tau\mathcal{D}$ represents the time required to visit one-third of the total nodes in the static network. It is given by:

\begin{equation}\label{taueq}
  \tau = \frac{\langle k \rangle}{\beta (\langle k^{2} \rangle - \langle k \rangle)}
\end{equation}

where $\beta$ is a network-dependent parameter that captures the variability in node connectivity.

\subsection{Analytical Modeling}\label{temporalNetworkDiameter}

In this section, we present the theoretical approach for analyzing flow propagation in temporal networks. This analysis helps us gain a deeper understanding of information spread and the time required for full network coverage.

\begin{theorem}\label{lemastochasticdiameter}
Considering $\langle \hat{k} \rangle$ in Equation \ref{eq_eff_degree}, the size of the reachable set at step $t$ in a random temporal network is given by:

\begin{equation}\label{SrachDynamicFormulaeq}
  |\mathcal{R}_t(i)| = 
  \begin{cases}
    \langle \hat{k} \rangle, & \text{if } t = 1 \\[8pt]
    \sum\limits_{l = 1}^{|\mathcal{R}_{t-1}(i)|} 
    \left( \langle \hat{k} \rangle \cdot 
    \frac{n - \sum\limits_{t'=0}^{t-1} |\mathcal{R}_{t'}(i)| - \sum\limits_{z=0}^{l} (z \cdot \langle \hat{k} \rangle)}{n} 
    \right), & \text{otherwise}.
  \end{cases}
\end{equation}

\begin{proof}
We model the spreading process in a random temporal network with $n$ nodes, where each node establishes $\langle \hat{k} \rangle$ random temporal links per time step. Let $\mathcal{R}_t(i)$ denote the set of nodes reachable from node $i$ after $t$ time steps, via temporal paths.

At $t=1$, the initial set of reachable nodes from node $i$ consists of its immediate neighbors. Since the network is random and node degrees follow an average $\langle \hat{k} \rangle$, we have:
\[
|\mathcal{R}_1(i)| = \langle \hat{k} \rangle.
\]

For $t>1$, each node in $\mathcal{R}_{t-1}(i)$ attempts to reach new nodes, selected uniformly at random among the yet-unreached nodes. The probability that a connection from a node in $\mathcal{R}_{t-1}(i)$ leads to a new node depends on the remaining number of unvisited nodes.

Let $N_t = n - \sum_{t'=0}^{t-1} |\mathcal{R}_{t'}(i)|$ denote the number of remaining nodes at step $t$. The $l$-th node in $\mathcal{R}_{t-1}(i)$ has $\langle \hat{k} \rangle$ new edges, but due to possible overlap, the actual number of new nodes it reaches is scaled by the probability of choosing an unreached node, which decreases with each connection attempt. Hence, the expected number of new nodes reached by the $l$-th node is:

\[
\langle \hat{k} \rangle \cdot \frac{N_t - \sum_{z=0}^{l} (z \cdot \langle \hat{k} \rangle)}{n}.
\]

Summing over all $l = 1$ to $|\mathcal{R}_{t-1}(i)|$ yields the total expected number of newly reached nodes at time $t$, which completes the recurrence relation in Equation~\ref{SrachDynamicFormulaeq}.

Note that we assume no backward connections to previous time layers, as such connections would imply the existence of shorter paths, violating the assumption of minimal-length temporal paths. Therefore, the expansion process is strictly layered and forward.
\end{proof}
\end{theorem}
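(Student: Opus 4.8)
The plan is to establish the recurrence by a mean-field expectation argument: replace the random size of each reachable set by its expected value and invoke linearity of expectation at every step. I would view the spreading process as a layered breadth-first expansion in which, at each time step, only the nodes on the current frontier fire their $\langle \hat{k} \rangle$ temporal edges forward. The forward-layered assumption stated at the close of the theorem — no backward edges to earlier time layers — is precisely what licenses treating the process as a one-directional frontier sweep, since any backward edge would yield a shorter temporal path and violate minimality. Fixing $n$ and the per-step emission rate $\langle \hat{k} \rangle$, the goal is then to compute the expected number of genuinely new nodes claimed at each step and show it collapses to the stated sum.

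First I would dispose of the base case. At $t=1$ the only nodes reachable from $v_i$ are its immediate temporal neighbors, and since each node emits $\langle \hat{k} \rangle$ links per step, the expected frontier size is $|\mathcal{R}_1(i)| = \langle \hat{k} \rangle$, matching the first branch directly from the definition of effective degree in Equation~\ref{eq_eff_degree}.

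For $t>1$ I would set $N_t = n - \sum_{t'=0}^{t-1}|\mathcal{R}_{t'}(i)|$ for the number of still-unvisited nodes entering step $t$, and model each of the $\langle \hat{k} \rangle$ edges fired by a frontier node as landing on a uniformly random target, so that a single edge hits a fresh node with probability (fresh nodes remaining)$/n$. I would then process the $|\mathcal{R}_{t-1}(i)|$ frontier nodes in a fixed order, indexed by $l$, tracking how the fresh pool is consumed by the nodes processed earlier: after the first $l$ nodes have fired, the pool is depleted by the cumulative amount $\sum_{z=0}^{l}(z \cdot \langle \hat{k}\rangle)$, giving the $l$-th node a per-edge hit probability of $(N_t - \sum_{z=0}^{l}(z \cdot \langle \hat{k}\rangle))/n$. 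Multiplying by its $\langle \hat{k}\rangle$ edges and summing over $l = 1$ to $|\mathcal{R}_{t-1}(i)|$ reproduces exactly the second branch of Equation~\ref{SrachDynamicFormulaeq}.

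The hard part will be justifying the sequential depletion term $\sum_{z=0}^{l}(z \cdot \langle \hat{k}\rangle)$ rigorously rather than heuristically, and three issues demand care. First, the notation must be reconciled: for $N_t$ to count unvisited nodes correctly, the summands $|\mathcal{R}_{t'}(i)|$ in $N_t$ must be read as per-step \emph{increments} (newly reached nodes), whereas the reachable set of the earlier definitions is cumulative, so I would fix this convention explicitly at the outset. Second, the mean-field substitution that treats $|\mathcal{R}_{t-1}(i)|$ and the pool size as deterministic is only exact in expectation under an independence assumption on edge targets; I would either state the result as a mean-field approximation or back it with a concentration argument showing the frontier size is sharply concentrated for large $n$. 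Third, the depletion bookkeeping must correctly handle overlaps — a single fresh node targeted by several edges or by several frontier nodes — which is what the nested $z$-sum is meant to encode, so I would verify that the accounting neither double-counts nor drives the numerator negative, imposing the validity condition $\sum_{z=0}^{l}(z \cdot \langle \hat{k}\rangle) \le N_t$ (with clamping to zero otherwise). Once this depletion term is validated, the recurrence follows immediately from linearity of expectation.
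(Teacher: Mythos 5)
Your proposal follows essentially the same route as the paper's own proof: the identical mean-field base case $|\mathcal{R}_1(i)| = \langle \hat{k} \rangle$, the same unvisited pool $N_t = n - \sum_{t'=0}^{t-1} |\mathcal{R}_{t'}(i)|$, the same sequential per-node depletion term $\sum_{z=0}^{l} (z \cdot \langle \hat{k} \rangle)$ summed over the frontier, and the same forward-layered justification ruling out backward edges. The three caveats you flag --- the increment-versus-cumulative reading of $|\mathcal{R}_{t'}(i)|$, the unproven mean-field substitution, and the need to keep the numerator nonnegative --- are real issues, but the paper's proof leaves them equally unaddressed, so they refine rather than distinguish your argument.
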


\begin{lemma} \label{lem2}The time required to reach one-third of the nodes in temporal network (${\tau}\mathcal{D}$), can be estimated using: 
\begin{equation} \tau\mathcal{D} \approx \frac{\ln (N/3)}{\ln (1 + \langle \hat{k} \rangle / N)} \end{equation} where ${\tau}\mathcal{D}$ represents the number of steps needed to reach one-third of the network. \end{lemma}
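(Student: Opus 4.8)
The plan is to collapse the exact recurrence of Theorem~\ref{lemastochasticdiameter} into a single geometric growth law and then invert it at the one-third threshold defined in Equation~\ref{eq_ta_dim}. Write $i_t = |\mathcal{R}_t(i)|$ for the number of nodes reached after $t$ steps, with $i_0 = 1$ (the source alone) and $n = N$. First I would restrict attention to the sub-saturation regime $i_t \ll N$, in which the flow has covered only a small fraction of the network. There the numerator inside the summand of Equation~\ref{SrachDynamicFormulaeq} is dominated by $n$, so both the cumulative term $\sum_{t'=0}^{t-1}|\mathcal{R}_{t'}(i)|$ and the overlap correction $\sum_{z=0}^{l} z\langle\hat{k}\rangle$ are lower order and can be discarded to leading order. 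What survives is a multiplicative action on the current set: each of the $i_{t-1}$ already-reached nodes offers $\langle\hat{k}\rangle$ fresh temporal links, distributed over the $N$ possible targets, so the expected relative enlargement of the reachable set per step is $\langle\hat{k}\rangle/N$. This gives the first-order recurrence $i_t \approx \left(1 + \langle\hat{k}\rangle/N\right) i_{t-1}$.

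Second, I would accumulate this linear recurrence from the source. Iterating $t$ times yields $i_t \approx \left(1 + \langle\hat{k}\rangle/N\right)^{t}$, the familiar compound-growth form. Third, the definition of $\tau\mathcal{D}$ requires the least $t$ with $i_t \ge N/3$; treating this as an equality and taking natural logarithms of $\left(1 + \langle\hat{k}\rangle/N\right)^{t} = N/3$ gives $t\,\ln\!\left(1 + \langle\hat{k}\rangle/N\right) = \ln(N/3)$, and solving for $t$ produces exactly the claimed estimate $\tau\mathcal{D} \approx \ln(N/3)\big/\ln\!\left(1 + \langle\hat{k}\rangle/N\right)$.

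The main obstacle is making the growth ratio $1 + \langle\hat{k}\rangle/N$ precise rather than heuristic: the exact recurrence of Theorem~\ref{lemastochasticdiameter} carries leading factor $\langle\hat{k}\rangle$ per reached node, so the reduction to the relative rate $\langle\hat{k}\rangle/N$ rests on reading the expansion as a fractional enlargement of the whole set and on the overlap and saturation corrections being genuinely second order — a step valid only while $i_t$ stays well below $N$. Since the target $N/3$ sits near the edge of that sub-saturation window, I would justify the approximation by checking that the discarded terms contribute $o(1)$ to $\ln i_t$ over the relevant range of $t$, and absorb the base-step mismatch (where Theorem~\ref{lemastochasticdiameter} gives $i_1 = \langle\hat{k}\rangle$) into the $\approx$ of the statement. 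The remaining looseness — rounding the real-valued $t$ to an integer and neglecting the $(N-i)$ damping near the threshold — is precisely what the approximate equality in the lemma is meant to tolerate.
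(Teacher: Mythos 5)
Your derivation has a genuine gap at its central step: the reduction of the recurrence in Theorem~\ref{lemastochasticdiameter} to the per-step growth ratio $1+\langle\hat k\rangle/N$. In the sub-saturation regime $i_t \ll N$ that you invoke, the fraction inside the summand of Equation~\ref{SrachDynamicFormulaeq} is $(n-\text{lower order})/n \approx 1$, so each already-reached node contributes approximately $\langle\hat k\rangle$ \emph{whole new nodes} per step, not a relative enlargement of the set by $\langle\hat k\rangle/N$. Collapsing the recurrence honestly therefore gives $i_t \approx (1+\langle\hat k\rangle)\,i_{t-1}$, hence $\tau\mathcal{D} \approx \ln(N/3)/\ln(1+\langle\hat k\rangle)$ --- not the stated formula. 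Your phrase about the $\langle\hat k\rangle$ fresh links being ``distributed over the $N$ possible targets'' misreads the $1/n$ normalization in the theorem: that factor divides the \emph{count of unreached nodes} (whose numerator is $\approx n$ early on), it is not a $1/N$ hit probability attached to each new node. Nor can the mismatch be absorbed into the $\approx$ of the lemma: with rate $1+\langle\hat k\rangle$ the threshold time is $\Theta(\log N/\log(1+\langle\hat k\rangle))$, whereas the lemma's expression is $\approx (N/\langle\hat k\rangle)\ln(N/3)$, larger by a factor of order $N$. The discarded terms thus shift $\ln i_t$ by $\Theta(\ln N)$, not $o(1)$, so the verification you propose in your final paragraph would fail rather than close the argument.

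For comparison, the paper does not attempt your discrete geometric collapse at all: it replaces the recurrence wholesale by a continuous SI-type saturation ansatz $|\mathcal{R}_t(i)| \approx N\left(1-e^{-t\,\langle\hat k\rangle/N}\right)$ and sets this equal to $N/3$. It is the mean-field per-capita rate $\langle\hat k\rangle/N$ built into that ansatz --- not anything derivable from the recurrence of Theorem~\ref{lemastochasticdiameter} --- that produces the denominator $\ln(1+\langle\hat k\rangle/N)$. (The paper's own sketch is itself loose: solving its ansatz at $N/3$ actually yields $t=(N/\langle\hat k\rangle)\ln(3/2)$, i.e.\ numerator $\ln(3/2)$ rather than $\ln(N/3)$.) So the specific step your proposal stumbles on --- justifying the rate $\langle\hat k\rangle/N$ from the exact recurrence --- is precisely the step the paper sidesteps by fiat, and as written neither route rigorously delivers the stated constant; but your version, unlike the paper's, commits to a concrete derivation of that rate, and that derivation is incorrect.
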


\begin{proof} Using the recursive equation for $|\mathcal{R}_t(i)|$, we approximate its continuous growth with an exponential model: \begin{equation} |\mathcal{R}_t(i)| \approx N \left(1 - e^{- t \cdot \langle \hat{k} \rangle / N} \right) \end{equation} Setting $|\mathcal{R}_t(i)| = N/3$ and solving for $t$ gives the stated result. \end{proof}

These results provide an analytical framework for understanding the spread dynamics in temporal networks and estimating the necessary steps for full network coverage.

\begin{lemma}\label{inversRel}
For a temporal network with a fixed degree distribution and activation time, the expected effective diameter ($\backsim$$\mathcal{D}$) is inversely related to the probability $\hat{p} = \frac{\zeta}{\mathcal{T}}$.
\end{lemma}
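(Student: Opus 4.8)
The plan is to build directly on the exponential growth model of the reachable set introduced in the proof of Lemma~\ref{lem2}, namely $|\mathcal{R}_t(i)| \approx N\bigl(1 - e^{-t\langle \hat{k}\rangle / N}\bigr)$, and to combine it with the definition of the effective diameter in Equation~\ref{eq_eff_dim}. Since $\backsim\mathcal{D}$ is the number of time steps needed for a flow to reach essentially all of its reachable nodes, I would fix a coverage threshold $|\mathcal{R}_t(i)| = (1-\varepsilon)N$, where $\varepsilon$ is a small constant representing the residual unreached fraction, and solve for the corresponding $t$, which I then identify as the effective diameter.

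First I would invert the exponential relation. Setting $(1-\varepsilon)N = N\bigl(1 - e^{-t\langle\hat{k}\rangle/N}\bigr)$ yields $e^{-t\langle\hat{k}\rangle/N} = \varepsilon$, and hence
\begin{equation}
  \backsim\mathcal{D} \approx \frac{N\ln(1/\varepsilon)}{\langle\hat{k}\rangle}.
\end{equation}
Second, I would substitute the effective-degree relation from Equation~\ref{eq_eff_degree}, namely $\langle\hat{k}\rangle = \langle k\rangle\,\hat{p}$, to obtain
\begin{equation}
  \backsim\mathcal{D} \approx \frac{N\ln(1/\varepsilon)}{\langle k\rangle\,\hat{p}}.
\end{equation}
Because the degree distribution and the activation time are held fixed by hypothesis, the entire prefactor $N\ln(1/\varepsilon)/\langle k\rangle$ is a constant, so $\backsim\mathcal{D}$ scales as $1/\hat{p}$, which is exactly the claimed inverse relationship.

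To make the dependence rigorous rather than merely heuristic, I would differentiate the closed form (or, more faithfully, the sharper estimate $\backsim\mathcal{D}\approx \ln N / \ln(1+\langle\hat{k}\rangle/N)$ that follows from Lemma~\ref{lem2}) with respect to $\hat{p}$ and verify that the derivative is strictly negative for all admissible $\hat{p}\in(0,1]$. The main obstacle I anticipate is precisely that this sharper estimate carries a logarithmic denominator, so the clean reciprocal law holds only in the sparse regime $\langle\hat{k}\rangle \ll N$; outside that regime the relationship remains strictly decreasing but is no longer exactly reciprocal. I would therefore phrase the conclusion as a monotone inverse dependence, proving $d(\backsim\mathcal{D})/d\hat{p} < 0$ in general and recovering the exact proportionality $\backsim\mathcal{D}\propto 1/\hat{p}$ as the leading-order behavior in the sparse limit.
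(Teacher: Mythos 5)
Your proposal is correct and follows essentially the same route as the paper: both invert the exponential growth model $|\mathcal{R}_t(i)| \approx N\bigl(1-e^{-t\langle\hat{k}\rangle/N}\bigr)$ from Lemma~\ref{lem2} to extract the coverage time and then read off the monotone decrease in $\hat{p}$. If anything, your version is tidier than the paper's own proof, which formally substitutes $|\mathcal{R}_t(i)| = N$ (strictly yielding $t=\infty$, a degeneracy your $(1-\varepsilon)N$ threshold avoids) and inserts $\hat{p}$ a second time into the denominator $\ln\bigl(1+\hat{p}\,\langle\hat{k}\rangle/N\bigr)$ even though $\langle\hat{k}\rangle = \langle k\rangle\,\hat{p}$ already contains that factor by Equation~\ref{eq_eff_degree} --- a double-counting your single, explicit substitution correctly sidesteps.
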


\begin{proof}

according to lemma \ref{lem2} we have $\backsim$$\mathcal{D} \approx \frac{\ln (N)}{\ln (1 + \langle \hat{k} \rangle / N)} $ and  $ |\mathcal{R}_t(i)| \approx N \left(1 - e^{- t \cdot \langle \hat{k} \rangle / N} \right) $ Substituting $ |\mathcal{R}_t(i)| = N$, we solve for $t$

\begin{equation} 
\backsim\mathcal{D} \approx \frac{\ln (N)}{\ln (1 +\hat{p} \langle \hat{k} \rangle / N)}
 \end{equation}
Since $\hat{p} = \frac{\zeta}{\mathcal{T}}$ we can express $\backsim \mathcal{D}$ as
\begin{equation}
\backsim\mathcal{D} \approx \frac{\ln N}{\ln(1 + (\zeta / T) \cdot \langle \hat{k} \rangle / N)}
\end{equation}
Clearly, as $\hat{p}$ (or equivalently, $\zeta$) increases, the denominator grows, making $\backsim\mathcal{D}$ smaller.

 Since $\ln(1 + x)$ is an increasing function, a higher $\hat{p}$ leads to a larger denominator and thus a smaller $\backsim\mathcal{D}$. Therefore, the temporal diameter $\backsim\mathcal{D}$ decreases as the effective reachability probability $\hat{p}$ increases.
 
 \begin{equation}
 \backsim\mathcal{D} \approx \frac{\ln N}{\ln(1 + \hat{p} \cdot \langle\hat{k} \rangle / N)}
 \end{equation}

\end{proof}

\begin{theorem}
In a temporal network with a fixed average degree $\langle\hat{k} \rangle$, the effective diameter $\backsim\mathcal{D}$ grows logarithmically with the number of nodes N, assuming uniform temporal activation:

 \begin{equation}
 \backsim\mathcal{D} \propto log N
 \end{equation}

\end{theorem}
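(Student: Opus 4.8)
The plan is to treat this theorem as the asymptotic reading of Lemma~\ref{inversRel}: once the structural parameters of the network are held fixed, the only surviving dependence on system size should be the $\ln N$ factor in the numerator of the closed-form estimate already derived. Accordingly, I would begin by fixing the effective average degree $\langle \hat{k} \rangle$ and the activation probability $\hat{p} = \zeta/T$ as constants independent of $N$, and then import the expression
\begin{equation}
\backsim\mathcal{D} \approx \frac{\ln N}{\ln\left(1 + \hat{p}\cdot \langle \hat{k} \rangle / N\right)}
\end{equation}
established in the proof of Lemma~\ref{inversRel}. The claim $\backsim\mathcal{D} \propto \log N$ then reduces to showing that the denominator behaves like a positive constant with respect to $N$, so that $\backsim\mathcal{D}$ is a fixed multiple of $\ln N$ (a change of logarithm base only rescales this constant).

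The direct route is to expand the denominator. Writing $x = \hat{p}\langle \hat{k}\rangle/N$ and using $\ln(1+x) = x - x^2/2 + \cdots$, one sees that the denominator is strictly positive and monotone in $x$, so $\backsim\mathcal{D}$ is well defined. The subtle point, and the place where I expect the argument to need the most care, is the regime in which $N$ is taken large. If $\langle \hat{k}\rangle$ is literally held constant while $N\to\infty$, then $x\to 0$ and $\ln(1+x)\approx x = \Theta(1/N)$, which would inflate $\backsim\mathcal{D}$ beyond $\Theta(\ln N)$; the clean proportionality therefore requires the standard connected-regime assumption that the effective degree does not vanish relative to $N$ (equivalently, that $x$ stays bounded away from $0$), under which the denominator is $\Theta(1)$ and the stated proportionality follows.

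For this reason I would actually present the result through the cleaner branching argument underlying the recurrence of Theorem~\ref{lemastochasticdiameter}. At each time step the reachable set expands by a factor of order $\langle \hat{k}\rangle$, so $|\mathcal{R}_t(i)|$ grows multiplicatively as $|\mathcal{R}_t(i)| \sim \langle \hat{k}\rangle^{\,t}$ until saturation; imposing full coverage $|\mathcal{R}_t(i)| = N$ and solving for $t$ gives $\backsim\mathcal{D} \approx \ln N / \ln \langle \hat{k}\rangle$. Here the denominator $\ln\langle \hat{k}\rangle$ is manifestly a constant in $N$, so $\backsim\mathcal{D} = \Theta(\ln N)$ is immediate. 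The main obstacle, in either route, is thus not the algebra but the justification that the per-step multiplicative factor stays bounded and bounded away from $1$ as $N$ grows, i.e.\ that the uniform-activation hypothesis keeps the effective degree in the supercritical, slowly-saturating regime where the multiplicative growth picture is valid.
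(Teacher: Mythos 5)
Your preferred route (the branching argument) is essentially the paper's own proof: the paper models the per-step increment as $\Delta\mathcal{R}_t = \hat{p}\,\langle k\rangle\,|\mathcal{R}_{t-1}(i)|$, passes to the continuous-time equation $\frac{d|\mathcal{R}_t(i)|}{dt} = \frac{\zeta}{T}\langle\hat{k}\rangle\,|\mathcal{R}_t(i)|$, and extracts $\backsim\mathcal{D} = \frac{T}{\zeta\langle\hat{k}\rangle}\ln N$ — exponential growth at an $N$-independent rate, exactly your multiplicative-expansion picture with base $e^{\hat{p}\langle\hat{k}\rangle}$ instead of $\langle\hat{k}\rangle$, a difference that is immaterial to the proportionality claim. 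What you add, and what is genuinely valuable, is the observation that the first route fails: the formula $\backsim\mathcal{D} \approx \ln N / \ln\bigl(1 + \hat{p}\langle\hat{k}\rangle/N\bigr)$ inherited from Lemma~\ref{inversRel} has a denominator of order $1/N$ when $\langle\hat{k}\rangle$ is held fixed, which yields $\Theta(N\ln N)$ rather than $\Theta(\ln N)$. This is a real inconsistency in the paper that its proof silently sidesteps by dropping the $/N$ from the growth rate in the theorem's derivation, and your explicit supercriticality caveat (the per-step factor must stay bounded away from $1$) is the correct repair. One further point you could have flagged: the paper's proof states the linear ODE above but then writes the saturating form $|\mathcal{R}_t(i)| = N\bigl(1 - e^{-\frac{\zeta}{T}\langle\hat{k}\rangle t}\bigr)$ as its solution and sets it equal to $N$, which formally admits no finite $t$; the stated answer $\frac{T}{\zeta\langle\hat{k}\rangle}\ln N$ instead comes from solving $e^{\frac{\zeta}{T}\langle\hat{k}\rangle t} = N$ (equivalently, thresholding the saturating form at $e^{-ct} = 1/N$). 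Your writeup avoids this glitch, so on the whole your argument is correct, matches the paper's strategy, and is more carefully conditioned than the original.
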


\begin{proof}
As derived in previous lemmas, the number of reachable nodes at time t follows the recurrence:
\begin{center}
$
|\mathcal{R}_t(i)| = |\mathcal{R}_{t-1}(i)| + \Delta \mathcal{R}_t.
$
\end{center}
where the newly reached nodes at each step $( \Delta \mathcal{R}_t )$ depend on the previously reached nodes and the effective link activation probability:
\begin{center}
$\Delta \mathcal{R}_t = \hat p \cdot \langle k \rangle \cdot |\mathcal{R}_{t-1}(i)|.$
\end{center}
Using the {effective reachability probability} $( \hat p = \frac{\zeta}{T} )$, we rewrite the growth equation as:

\begin{center}
$|\mathcal{R}_t(i)| = |\mathcal{R}_{t-1}(i)| + \frac{\zeta}{T} \cdot \langle \hat k \rangle \cdot |\mathcal{R}_{t-1}(i)|.$
\end{center}
Approximating the growth as a continuous process, we express the evolution of the reachable set as a differential equation:

\begin{center}
$\frac{d|\mathcal{R}_t(i)|}{dt} = \frac{\zeta}{T} \cdot \langle \hat k \rangle \cdot |\mathcal{R}_t(i)|.$
\end{center}
Solving this equation gives an exponential growth pattern:

\begin{center}
$|\mathcal{R}_t(i)| = N \left(1 - e^{-\frac{\zeta}{T} \cdot \langle \hat k \rangle \cdot t}\right).$
\end{center}
To find the time required to reach the entire network, we set $( |\mathcal{R}_t(i)| = N)$ and solve for t:

\begin{center}
$1 - e^{-\frac{\zeta}{T} \cdot \langle \hat k \rangle \cdot t} = 1.$
\end{center}
which yields:

\begin{center}
$\backsim \mathcal{D}= \frac{T}{\zeta \langle \hat k \rangle} \ln N.$
\end{center}
Thus:

\begin{center}
$\backsim \mathcal{D} \propto \log N.$
\end{center}

\end{proof}

\subsection{Empirical Validation and Simulation}

In this section, we evaluate the accuracy of our proposed model through extensive simulations across various network configurations. Our primary objective is to compare the theoretical diameters predicted by the model with those observed in simulated networks. Specifically, we examine three key aspects: accuracy, the impact of degree and network size, and the behavior of different degree distributions.

To assess the accuracy of Equation (\ref{SrachDynamicFormulaeq}), we implemented a temporal network simulation incorporating realistic degree distributions. A detailed explanation of the simulation algorithm is provided in Section \ref{appendix}.

Figure \ref{NormalDynamicCMPfigk} presents the network diameter for a fixed network size $N=500$ while varying the average degree between 10 and 70. As expected, increasing the average degree leads to a reduction in the effective diameter. When the average degree surpasses 70, the network becomes sufficiently dense, allowing information or flow to propagate to all nodes within just two steps, effectively reducing the diameter to 2. This observation aligns with the claim of Lemma \ref{inversRel}, confirming that the effective diameter and the average degree exhibit an inverse relationship. Moreover, Figure \ref{NormalDynamicCMPfigk} demonstrates that this inverse relationship holds consistently across different degree distributions, indicating that the choice of distribution does not alter this fundamental trend.

\begin{figure}[htbp]
  \centering
  \begin{subfigure}{0.32\textwidth}
    \centering\includegraphics[width=\textwidth]{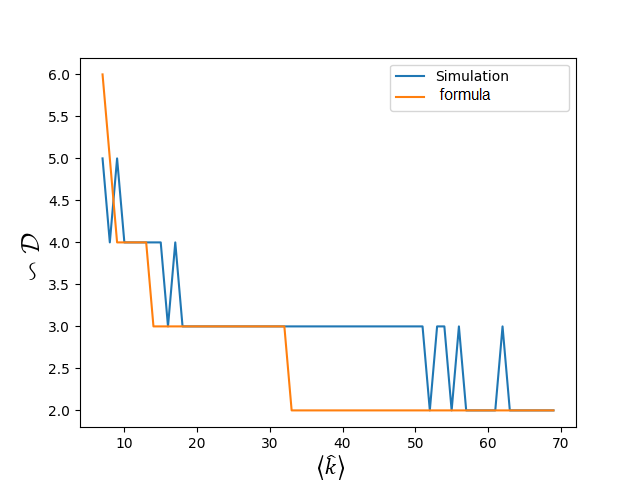}
    \caption{ }\label{DynamicNoramlKfig}
  \end{subfigure}
  \hfill
  \begin{subfigure}{0.32\textwidth}
    \centering\includegraphics[width=\textwidth]{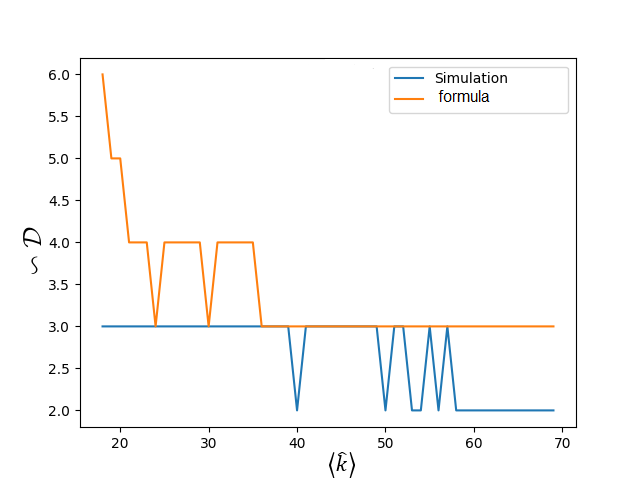}
    \caption{ }\label{DynamicParretoKfig}
  \end{subfigure}
  \hfill
  \begin{subfigure}{0.32\textwidth}
    \centering\includegraphics[width=\textwidth]{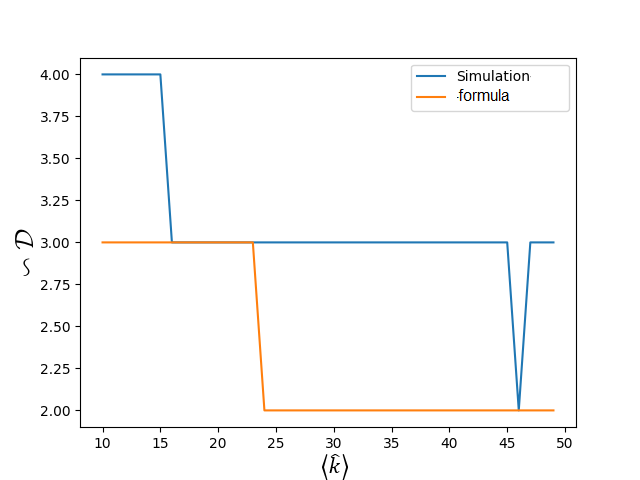}
    \caption{ }\label{DynamicPission-kfig}
  \end{subfigure}
  \caption{Comparison of effective diameter ($\backsim \mathcal{D}$) between theoretical predictions and simulations for different distributions: (a) normal, (b) Pareto, and (c) Poisson. All simulations are conducted with \(N=500\) nodes, and the plots depict the diameter across various values of the average degree 
$\langle \hat k \rangle$.}\label{NormalDynamicCMPfigk}
\end{figure}

Following the accuracy assessment of Equation (\ref{SrachDynamicFormulaeq}), Figure \ref{NormalDynamicCMPfigN} examines the network diameter across different network sizes while maintaining a fixed degree distribution (Normal). The results show that the theoretical predictions closely align with the empirical simulation values, further validating the accuracy of our model. Additionally, as the network size increases, the effective diameter grows accordingly, reinforcing the expected structural properties of large-scale networks.

\begin{figure}[htbp]
  \centering
  \begin{subfigure}{0.32\textwidth}
    \centering\includegraphics[width=\textwidth]{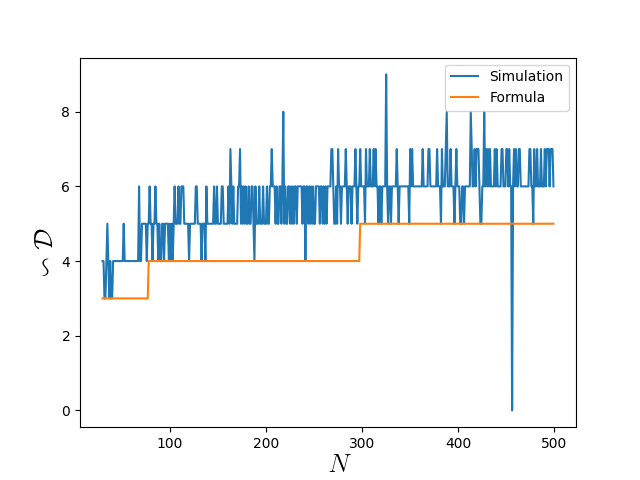}
    \caption{ }\label{DynamicNoramlNfig}
  \end{subfigure}
  \begin{subfigure}{0.32\textwidth}
    \centering\includegraphics[width=\textwidth]{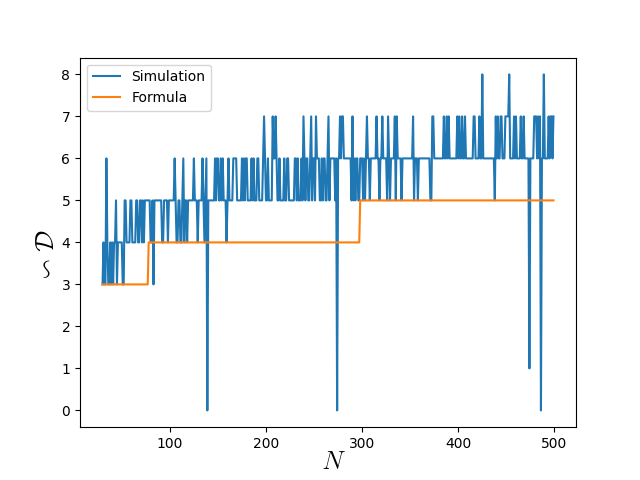}
    \caption{ }\label{ParretoDynamicNfig}
  \end{subfigure}
  \begin{subfigure}{0.32\textwidth}
    \centering\includegraphics[width=\textwidth]{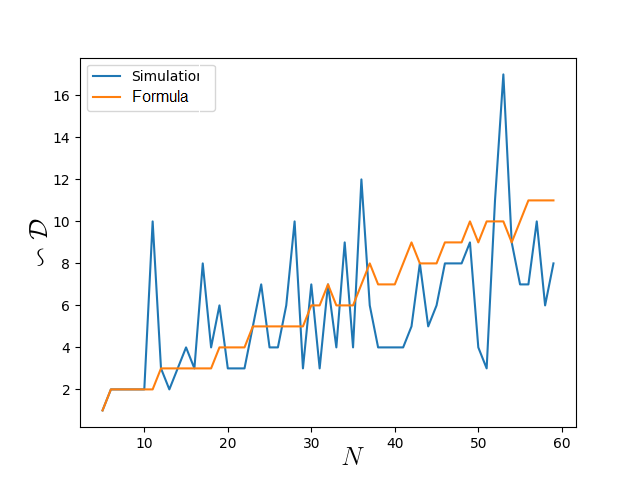}
    \caption{ }\label{DynamicPoissionNfig}
  \end{subfigure}
  \caption{Comparison between the network effective diameter ($\backsim \mathcal{D}$) in simulations and the theoretical equation for different link distributions. The plots correspond to (a) normal, (b) Pareto, and (c) Poisson distributions. In these networks, \(\langle \hat k \rangle = 5\), and the plots illustrate the diameter for varying values of \(N\).}
  \label{NormalDynamicCMPfigN}
\end{figure}

Table \ref{errorComparisontb} provides a quantitative comparison between the theoretical predictions of Equation (\ref{SrachDynamicFormulaeq}) and the empirical simulation results. It summarizes the RMSE, MSE, and absolute errors across various distributions and scenarios. The consistently low error values across all tested distributions confirm the robustness and accuracy of the proposed model. However, as shown in Table \ref{errorComparisontb}, the accuracy of the equation is influenced by both the degree distribution and network size, with Poisson-distributed networks exhibiting the highest deviations from the theoretical predictions.

\begin{table}[htbp]
\caption{Model accuracy across various degree distributions. The top section corresponds to a fixed network size of $N=500$ (as shown in Figure \ref{NormalDynamicCMPfigk}), while the bottom section presents results for a fixed average degree of $ \langle \hat k \rangle  = 5$ with varying network sizes N (as illustrated in Figure \ref{NormalDynamicCMPfigN}).}
\centering
\begin{tabular}{|l|c|c|c|c|}
\hline
\textbf{Scenario} & \textbf{Distribution} & \textbf{RMSE} & \textbf{MSE} & \textbf{Absolute Error} \\\hline
\multirow{3}{*}{Fixed $N$ = 500} & Normal & 0.67 & 0.45 & 0.45 \\\cline{2-5}
                                 & Pareto & 0.82 & 0.673 & 0.558 \\\cline{2-5}
                                 & Poisson & 0.88 & 0.775 & 0.775 \\\hline

\multirow{3}{*}{Fixed $\langle \hat k \rangle$ = 5} & Normal & 1.05 & 1.02 & 0.86 \\\cline{2-5}
                                              & Pareto & 1.45 & 2.114 & 1.277 \\\cline{2-5}
                                              & Poisson & 2.86 & 8.20 & 2.054 \\\hline

\end{tabular}

\label{errorComparisontb}
\end{table}

Figure \ref{dynamocFormulaKconstant} demonstrates that as the network size N increases while maintaining the same degree distribution, the effective diameter of the network also increases. This result aligns with theoretical expectations, as larger networks typically require more steps for flow to traverse from one side to another. On the other hand, Figure \ref{dynamicNewFormulaNConstantFig} illustrates that when the network size is fixed, an increase in the average degree $ \langle \hat k \rangle$ leads to a decrease in the effective diameter. This occurs because higher connectivity enhances the reachability of nodes, reducing the number of steps needed to cover the network. Together, these figures highlight the interplay between network size and connectivity in shaping the structural efficiency of information propagation.

\begin{figure}
    \centering
    \begin{subfigure}{7cm}
        \centering
        \includegraphics[width=7cm]{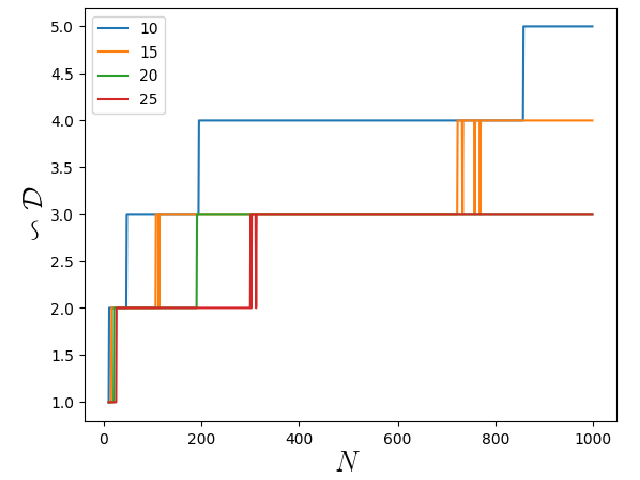}
        \caption{}\label{dynamocFormulaKconstant}
    \end{subfigure}
    \hfill
    \begin{subfigure}{8cm}
        \centering
        \includegraphics[width=8cm]{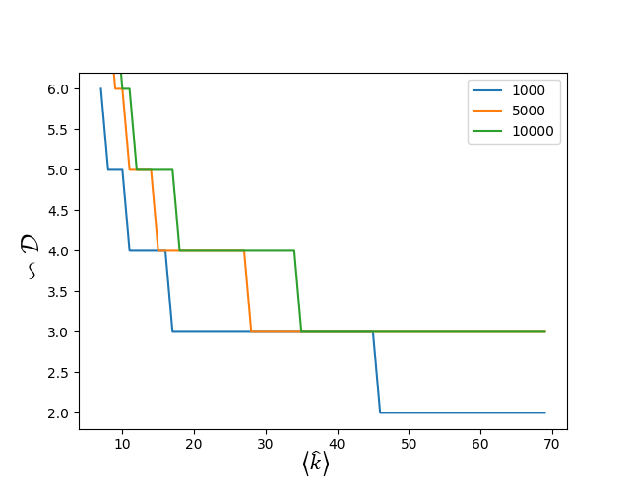}
        \caption{}\label{dynamicNewFormulaNConstantFig}
    \end{subfigure}
    \caption{Effect of network parameters on effective diameter ($\sim \mathcal{D}$). (a) The impact of network size $N$ on the effective diameter for different average degrees ($\langle \hat{k} \rangle \in \{10, 15, 20, 25\}$). (b) The effect of average degree on the effective diameter for networks of sizes $N \in \{1000, 5000, 10000\}$.}
\end{figure}



\subsection{Real Temporal Networks}

Following we analyse the diameter of networks in four different real data sets. The high school \cite{46journal.pone.0107878}, hospital \cite{47vanhems2013estimating}, work place \cite{45Genois2018} and contacts in a conference \cite{44/journal.pone.0011596}. We get all of them from the sociopattern website  (\url{http://www.sociopatterns.org}). These data sets represent the temporal connection among the participant. 

Figure \ref{fig:connection_duration} represent the connection duration among all nodes in the conference and high school. All the connections in the workplace and hospital last just one time step.  The connection duration follows a scale-free pattern meaning that there are lots of connections with low duration and some connections with high duration. Actually, in the real world we expect the same manner. 

\begin{figure}
  \begin{subfigure}{.5\textwidth}
    \centering\includegraphics[width=.8\linewidth]{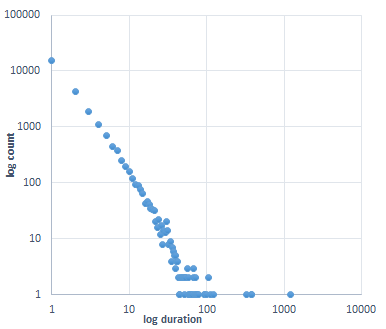}
    \caption{}
  \end{subfigure}
  \begin{subfigure}{.5\textwidth}
    \centering\includegraphics[width=.8\linewidth]{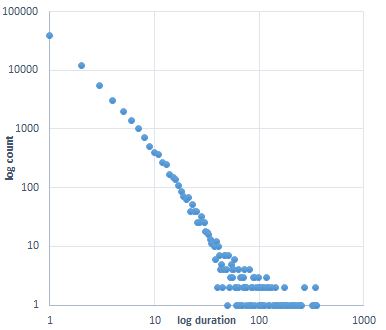}
    \caption{}\label{fig:highschool_contact_duration}
  \end{subfigure}

  \caption{Log-log plot of connection durations in (a) the Hospital contact network and (b) the High School contact network.}\label{fig:connection_duration}
\end{figure}

To be more precise in the matter of temporal network connections, we study the pattern of the gap between two connections, too. Figure \ref{fig:delay_duration} represents the gap between two connections among any two connected nodes. Also, the gap between connections follows a scale-free pattern, too. Actually, the gap better describes the nature of the time range between two disconnections make the connection period. 

\begin{figure}
  \begin{subfigure}{.5\textwidth}
    \centering\includegraphics[width=.8\linewidth]{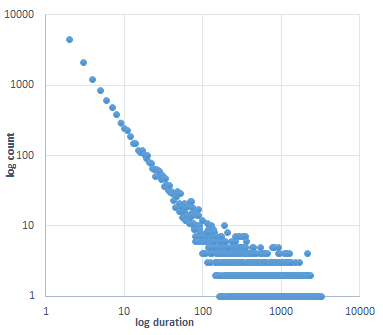}
    \caption{}\label{fig:confrence_delay_duration}
  \end{subfigure}
  \begin{subfigure}{.5\textwidth}
    \centering\includegraphics[width=.8\linewidth]{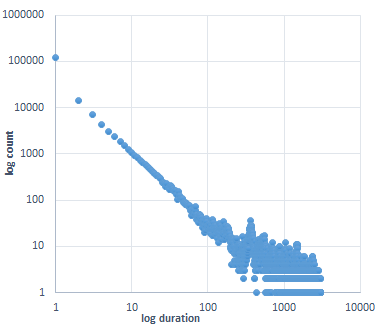}
    \caption{}\label{fig:highschool_delay_duration}
  \end{subfigure}
  \begin{subfigure}{.5\textwidth}
    \centering\includegraphics[width=.8\linewidth]{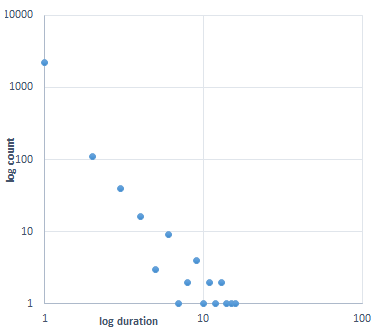}
    \caption{}\label{fig:workplace_delay_duration}
  \end{subfigure}
  \begin{subfigure}{.5\textwidth}
    \centering\includegraphics[width=.8\linewidth]{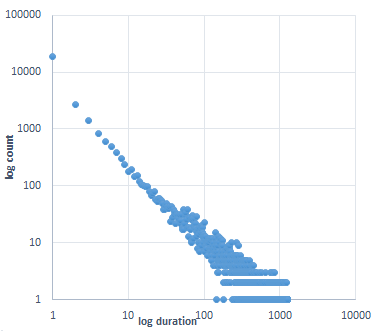}
    \caption{}\label{fig:hospital_delay_duration}
  \end{subfigure}
  \caption{Log-log plots of delay durations between consecutive connections in: (a) Conference, (b) High School, (c) Workplace, and (d) Hospital datasets.}\label{fig:delay_duration}
\end{figure}

Now, we study the feature of the networks in temporal scale-free networks. We will study three diameters in the network, the $\tau\mathcal{D}$ , $\backsim$$\mathcal{D}$ and $*\mathcal{D}$.

Figure \ref{fig:diameters} shows the comparison between three diameters in four data sets when $p\%$ of nodes in the network are removed. In the static networks, removing nodes causes the decrease in the diameter. But in a temporal network the concept of diameter is different, by removing the nodes there is a possibility that the diameter increases like figure \ref{fig:highschool_d}, have no changes like \ref{fig:confrence_d} and \ref{fig:hospital_d} or even decreases like in \ref{fig:workplace_d}. So the results show that there is no relation between the diameter in the static network and the diameter in the temporal networks.

Another important point in the plots of this figure is the relation between the $\tau\mathcal{D}$  and the peak diameter. In all the data sets, both $\tau \mathcal{D}$ and $*\mathcal{D}$ have close values representing that the $\tau \mathcal{D}$ is a crucial moment in the epidemic spreading and we expect that in a time very close to the $\tau\mathcal{D}$, the moment that the infected nodes are more than any other time, happens. So this can give us a prediction for controlling the epidemic. 

\begin{figure}[ht]
\begin{subfigure}{.5\textwidth}
  \centering
  \includegraphics[width=.8\linewidth]{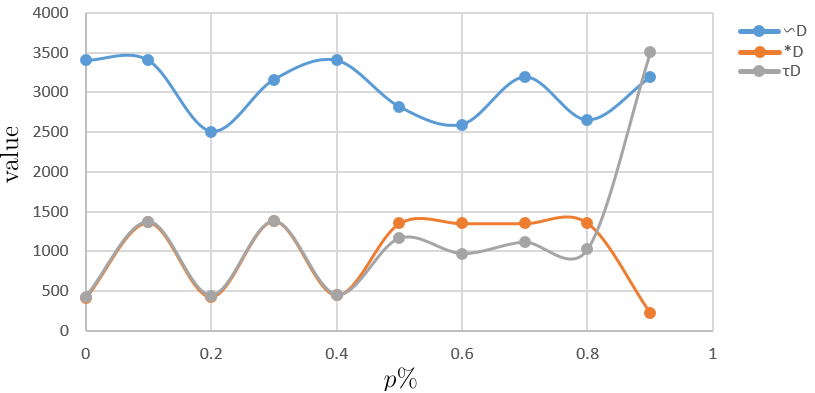}
  \caption{}
  \label{fig:confrence_d}
\end{subfigure}
\begin{subfigure}{.5\textwidth}
  \centering
  \includegraphics[width=.8\linewidth]{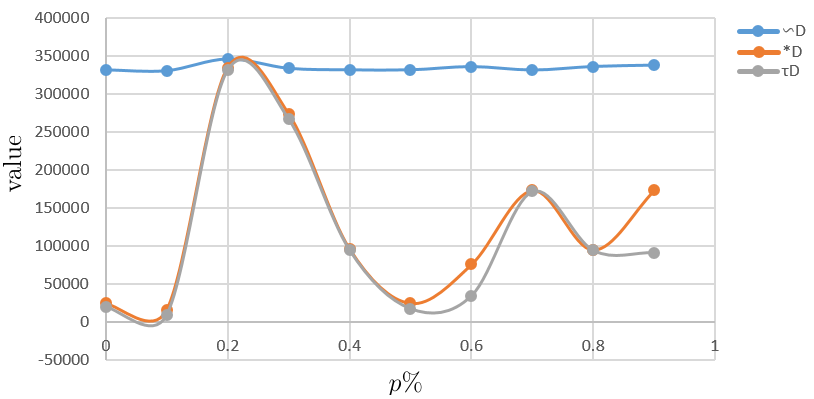}
  \caption{}
  \label{fig:hospital_d}
\end{subfigure}
\begin{subfigure}{.5\textwidth}
  \centering
  \includegraphics[width=.8\linewidth]{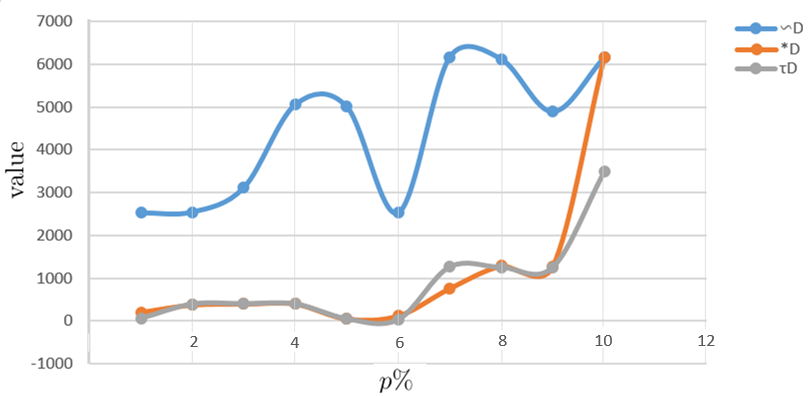}
  \caption{}
  \label{fig:highschool_d}
\end{subfigure}
\begin{subfigure}{.5\textwidth}
  \centering
  \includegraphics[width=.8\linewidth]{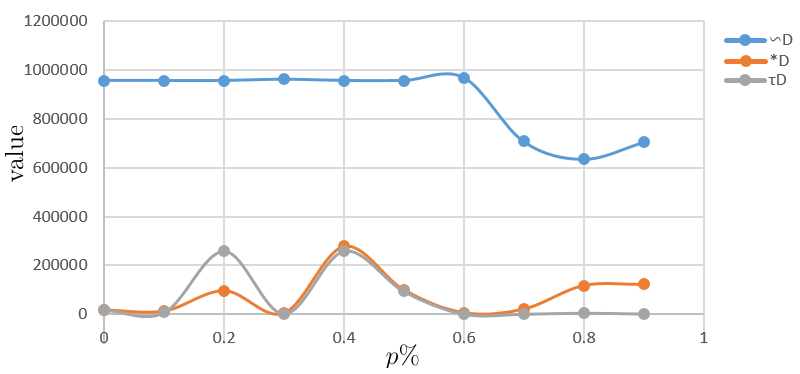}
  \caption{}
  \label{fig:workplace_d}
\end{subfigure}
\caption{Comparison of three types of diameters when $p\%$ of nodes are removed: $\tau\mathcal{D}$, $*\mathcal{D}$, and $\backsim \mathcal{D}$ in: a) the conference, b) the hospital, c) the high school, and d) the workplace network. }
\label{fig:diameters}
\end{figure}

Table \ref{feature_sf_tb} represents the relation between the number of nodes, edges, average degree and diameters in data sets. In all the data sets removing nodes, causes more significant changes in $\tau\mathcal{D}$ and $*\mathcal{D}$ than the $\backsim \mathcal{D}$. Consider a node with low connections that rarely connect to other nodes, as we expect this is the last node that gets infected, so, it does not have an effect on $\tau \mathcal{D}$ and $*\mathcal{D}$ but it has an effect on the $\backsim\mathcal{D}$. The $\backsim\mathcal{D}$ lasts until the last node connects to infected nodes. Since the probability of removing a specific node with a specific feature (the least connections with rare connection time) among all nodes, is low, thus the changes for the $\backsim\mathcal{D}$ are low.

\begin{longtable}{ |c|c|c|c|c|c|c|c| }
\caption{Impact of node removal on network structure across four datasets. For each proportion $p$, the table reports the number of nodes ($N$), edges ($E$), average degree ($\langle \hat k \rangle$), and three types of diameters: $\sim \mathcal{D}$, $\tau \mathcal{D}$ and $* \mathcal{D}$.}\label{feature_sf_tb} \\
\hline
 \textbf{Data set} & \textbf{p} & \textbf{N} & \textbf{E} & $\boldsymbol{\langle k\rangle}$ & $\boldsymbol{\sim \mathcal{D}}$ & $\boldsymbol{\tau \mathcal{D}}$ & $\boldsymbol{* \mathcal{D}}$  \\\hline
\multirow{10}{*}{{High school}}
  &1& 327 & 5818 & 17.79  & 2526 & 209 & 56    \\
  &0.1  & 296 & 4751 & 16.05  & 2543 & 382 & 394   \\
 &0.2  & 253 & 3519 & 13.90  & 3114 & 402 & 400   \\
 &0.3  & 244 & 3179 & 13.2   & 5054 & 403 & 397   \\
 &0.4  & 197 & 2203 & 11.18  & 5007 & 55  & 49    \\
 &0.5  & 165 & 1529 & 9.26   & 2526 & 126 & 30    \\
 &0.6  & 115 & 746  & 6.48   & 6163 & 764 & 1262  \\
 &0.7  & 100 & 538  & 5.38   & 6119 & 1289& 1252  \\
 &0.8  & 67  & 221  & 3.29   & 4889 & 1273 & 1252 \\
 & 0.9  & 29  & 40   & 1.37   & 6152 & 6152 & 3484 \\
\hline
\multirow{10}{*}{Conference}
  &1 & 403 & 9565 & 23.73   & 3408 & 421  & 410  \\
 & 0.1  & 365 & 7594 & 20.80   & 3408 & 1375 & 1364 \\
 & 0.2  & 312 & 5507 & 17.65   & 2507 & 442  & 429  \\
 & 0.3  & 277 & 4333 & 15.64   & 3163 & 1382 & 1383 \\
 & 0.4  & 241 & 3430 & 14.23   & 3408 & 451  & 450  \\
 &0.5  & 207 & 2588 & 12.50   & 2819 & 1163 & 1353 \\
 &0.6  & 146 & 1216 & 8.32    & 2596 & 970  & 1353 \\
 & 0.7  & 115 & 691  & 6.00    & 3197 & 1114 & 1353 \\
 & 0.8  & 80  & 353  & 4.41    & 2651 & 1025 & 1356 \\
 & 0.9  & 34  & 56   & 1.63    & 3197 & 3508 & 2234 \\
\hline
\multirow{10}{*}{Hospital}
&1 & 75 & 1139 & 15.18  & 331640 & 24700  & 20100 \\
 &0.1    & 66 & 846  & 12.81  & 330460 & 16100  & 9160 \\
 &0.2    & 60 & 696  & 11.6   & 346360 & 334340 & 331640 \\
 &0.3    & 49 & 504  & 10.28  & 333740 & 274380 & 267720 \\
&0.4    & 47 & 464  & 9.87   & 331640 & 96700  & 95140 \\
 &0.5    & 36 & 285  & 7.91   & 331640 & 25040  & 17800 \\
 &0.6    & 33 & 170  & 5.15   & 336040 & 75960  & 34780 \\
 &0.7    & 23 & 138  & 6.0    & 331440 & 173220 & 171920 \\
 &0.8    & 11 & 20   & 1.81   & 336220 & 94980  & 94320 \\
 &0.9    & 7  & 5    & 0.71   & 338020 & 173320 & 91620 \\
\hline
\multirow{10}{*}{workplace}
&1& 217 & 4247 & 19.69 & 955980 & 18100 & 17240 \\
 &0.1   & 197 & 3620 & 18.37 & 955980 & 13580 & 7960 \\
 &0.2   & 165 & 2433 & 14.74 & 955980 & 96680 & 259180\\
 &0.3   & 152 & 1983 & 13.04 & 961120 & 5000 & 1620 \\
 &0.4   & 135 & 1631 & 12.08 & 955980 & 277740 & 259180 \\
 &0.5   & 121 & 1391 & 11.49 & 955980 & 99340 & 93860 \\
 &0.6   & 87  & 654  & 7.51  & 966940 & 6660 & 1280 \\
 &0.7   & 65  & 399  & 6.13  & 707520 & 21440 & 1340 \\
 &0.8   & 51  & 252  & 4.94  & 636060 & 117760 & 5680 \\
 &0.9   & 23  & 59   & 2.56  & 704160 & 122180 & 2140 \\
\hline
\end{longtable}

Figure \ref{fig:heatmaps} presents Pearson correlation heatmaps for four distinct datasets, highlighting the relationships among key network parameters such as network size, edge count, average degree, and various types of diameter.

Across all four datasets, a strong positive correlation is observed among the three types of diameter, indicating that they tend to vary together. Additionally, the temporal average degree shows a direct relationship with both the number of temporal edges and the number of nodes — as the number of edges increases relative to the number of nodes, the average degree also increases.

Most notably, an inverse correlation exists between the average degree and all three types of diameter. This means that as the average degree decreases, the diameters increase, reflectng the intuitive notion that sparser networks tend to have longer paths between nodes.

\begin{figure*}[h]
    \centering
    \begin{subfigure}{0.45\textwidth}
        \centering
        \includegraphics[width=\textwidth]{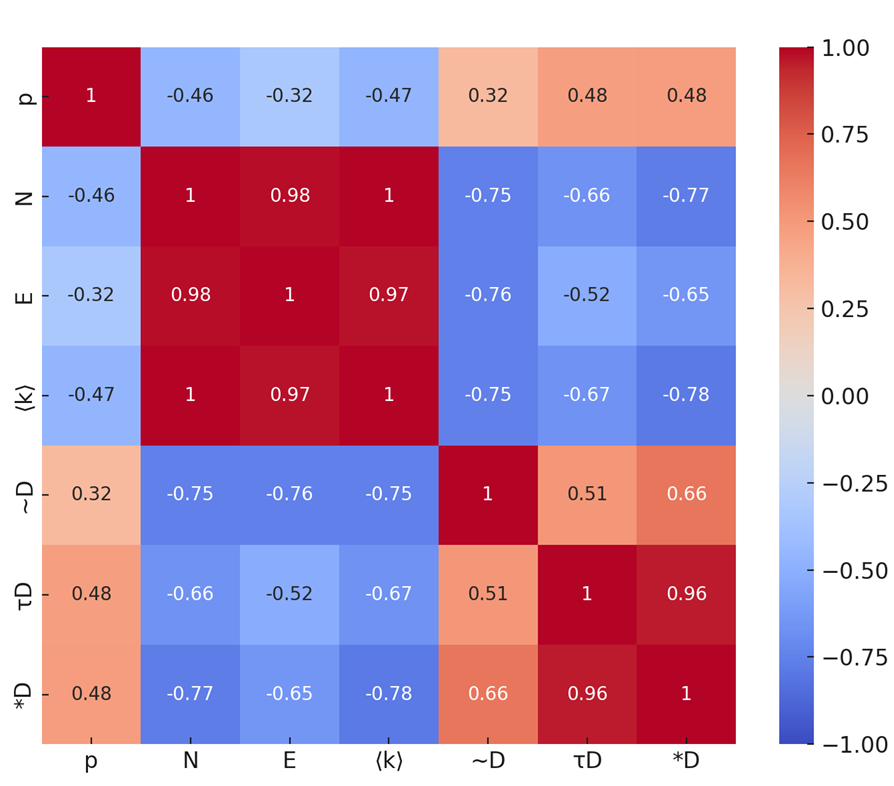}
        \caption{}
        \label{fig:highschool_heatmap}
    \end{subfigure}
    \hfill
    \begin{subfigure}{0.45\textwidth}
        \centering
        \includegraphics[width=\textwidth]{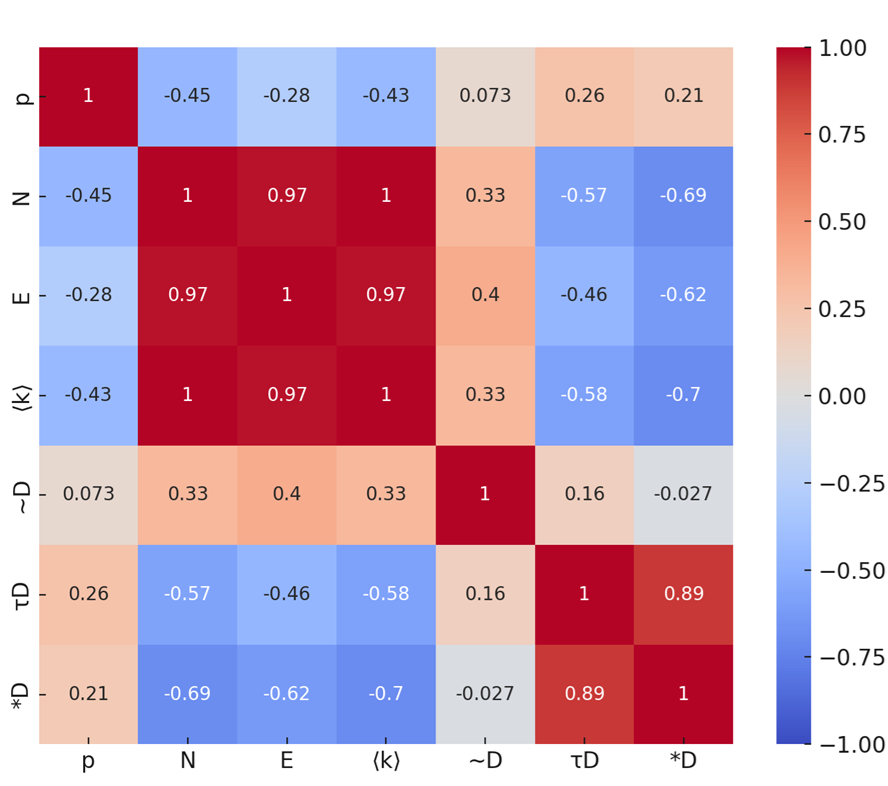}
        \caption{}
        \label{fig:conference_heatmap}
    \end{subfigure}
    
    \vspace{0.5cm}
    
    \begin{subfigure}{0.45\textwidth}
        \centering
        \includegraphics[width=\textwidth]{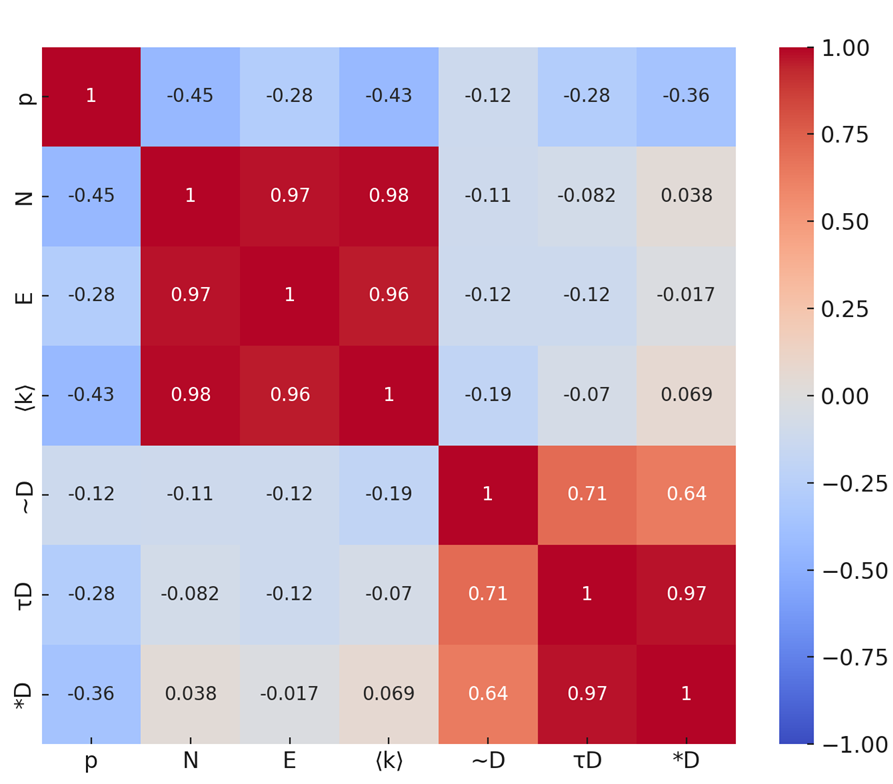}
        \caption{}
        \label{fig:hospital_heatmap}
    \end{subfigure}
    \hfill
    \begin{subfigure}{0.45\textwidth}
        \centering
        \includegraphics[width=\textwidth]{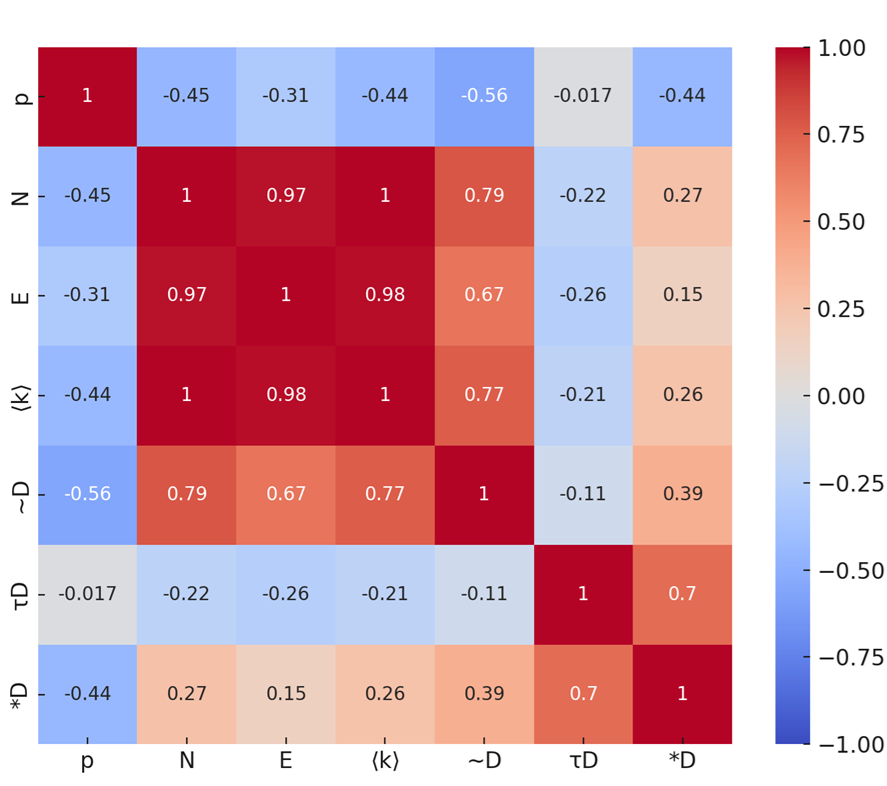}
        \caption{}
        \label{fig:workplace_heatmap}
    \end{subfigure}
    
    \caption{Pearson correlation heatmaps illustrating the relationships between key parameters across four datasets: (a) High School, (b) Conference, (c) Hospital, and (d) Workplace. The color scale ranges from -1 to 1, where red denotes strong positive correlations and blue represents strong negative correlations.}
    \label{fig:heatmaps}
\end{figure*}

\section{Conclusions and Future Directions}

In this study, we introduced a comprehensive mathematical framework to quantify the concept of diameter in temporal networks. Unlike traditional static approaches, our model incorporates time-dependent connectivity and captures the dynamic nature of real-world interactions. Central to our framework are new definitions such as the Effective Diameter ($\backsim \mathcal{D}$), Peak Diameter ($*$$\mathcal{D}$), and $\tau$-Diameter ($\tau$$\mathcal{D}$), which collectively offer a nuanced understanding of temporal reachability and information flow.

We validated our model through extensive simulations across networks of varying sizes and degree distributions. The results revealed consistent trends, specifically, a decrease in effective diameter with increased average degree and an increase in diameter with larger network sizes. These outcomes confirm our theoretical expectations and demonstrate the applicability of our framework in modeling real-world dynamics.

To evaluate the accuracy of our model, we conducted quantitative comparisons using RMSE, MSE, and absolute error metrics. The low error margins underscore the robustness of our approach, though deviations in networks with Poisson-like degree distributions suggest opportunities for refinement in such contexts.

We further applied our framework to four empirical temporal networks: high school, hospital, workplace, and conference datasets, to investigate the effect of node removal on different diameter metrics. The results showed that  $\tau$$\mathcal{D}$ and peak diameter are more sensitive to node removal than effective diameter, indicating the complex interplay between temporal structure and network resilience.

Our findings lay the groundwork for several promising directions for future research. One potential extension involves integrating more sophisticated temporal models that capture bursty behavior and heterogeneous interaction patterns. Additionally, incorporating node and edge attributes, such as roles or weights, could provide deeper insights into how structural properties influence temporal connectivity. Finally, applying this framework to domain-specific problems, such as epidemic modeling, cybersecurity, or mobility networks, could yield practical strategies for improving resilience and efficiency in time-sensitive systems.

\bibliographystyle{plain}

\begin{thebibliography}{99}

\bibitem{37mata2020complex}
A.~S.~d. Mata, ``Complex networks: a mini-review,'' {\em Brazilian Journal of
  Physics}, vol.~50, pp.~658--672, 2020.

\bibitem{48ashlock2021comparison}
D.~Ashlock and M.~Dub{\'e}, ``A comparison of novel representations for
  evolving epidemic networks,'' in {\em 2021 IEEE Conference on Computational
  Intelligence in Bioinformatics and Computational Biology (CIBCB)}, pp.~1--8,
  IEEE, 2021.

\bibitem{49fu2022dynamic}
X.~Fu and J.~Wang, ``Dynamic stability and optimal control of sisqiqrs epidemic
  network,'' {\em Chaos, Solitons \& Fractals}, vol.~163, p.~112562, 2022.

\bibitem{30ji2023signal}
P.~Ji, J.~Ye, Y.~Mu, W.~Lin, Y.~Tian, C.~Hens, M.~Perc, Y.~Tang, J.~Sun, and
  J.~Kurths, ``Signal propagation in complex networks,'' {\em Physics Reports},
  vol.~1017, pp.~1--96, 2023.

\bibitem{33wang2020epidemic}
Z.~Wang, C.~Xia, Z.~Chen, and G.~Chen, ``Epidemic propagation with positive and
  negative preventive information in multiplex networks,'' {\em IEEE
  transactions on cybernetics}, vol.~51, no.~3, pp.~1454--1462, 2020.

\bibitem{50liu2021enhancing}
C.~Liu, Y.~Dai, K.~Yu, and Z.-K. Zhang, ``Enhancing cancer driver gene
  prediction by protein-protein interaction network,'' {\em IEEE/ACM
  Transactions on Computational Biology and Bioinformatics}, vol.~19, no.~4,
  pp.~2231--2240, 2021.

\bibitem{53sporns2022graph}
O.~Sporns, ``Graph theory methods: applications in brain networks,'' {\em
  Dialogues in clinical neuroscience}, 2022.

\bibitem{6chung1989diameters}
F.~R. Chung, ``Diameters and eigenvalues,'' {\em Journal of the American
  Mathematical Society}, vol.~2, no.~2, pp.~187--196, 1989.

\bibitem{7delorme1991diameter}
C.~Delorme and P.~Sol{\'e}, ``Diameter, covering index, covering radius and
  eigenvalues,'' {\em European Journal of Combinatorics}, vol.~12, no.~2,
  pp.~95--108, 1991.

\bibitem{64newman2001scientific}
M.~E. Newman, ``Scientific collaboration networks. ii. shortest paths, weighted
  networks, and centrality,'' {\em Physical review E}, vol.~64, no.~1,
  p.~016132, 2001.

\bibitem{63lee2008random}
S.~Lee, S.-H. Yook, and Y.~Kim, ``Random walks and diameter of finite
  scale-free networks,'' {\em Physica A: Statistical Mechanics and its
  Applications}, vol.~387, no.~12, pp.~3033--3038, 2008.

\bibitem{5cvetkovie1979spectru}
D.~CvetkoviE, M.~Doob, and H.~Sachs, ``Spectru oj graphs-theoy and
  applications,'' 1979.

\bibitem{60dorogovtsev2008critical}
S.~N. Dorogovtsev, A.~V. Goltsev, and J.~F. Mendes, ``Critical phenomena in
  complex networks,'' {\em Reviews of Modern Physics}, vol.~80, no.~4, p.~1275,
  2008.

\bibitem{59samoylenko2023there}
I.~Samoylenko, D.~Aleja, E.~Primo, K.~Alfaro-Bittner, E.~Vasilyeva,
  K.~Kovalenko, D.~Musatov, A.~M. Raigorodskii, R.~Criado, M.~Romance, {\em
  et~al.}, ``Why are there six degrees of separation in a social network?,''
  {\em Physical Review X}, vol.~13, no.~2, p.~021032, 2023.

\bibitem{41noschese2023network}
S.~Noschese and L.~Reichel, ``Network analysis with the aid of the path length
  matrix,'' {\em arXiv preprint arXiv:2305.07978}, 2023.

\bibitem{12leskovec2005graphs}
J.~Leskovec, J.~Kleinberg, and C.~Faloutsos, ``Graphs over time: densification
  laws, shrinking diameters and possible explanations,'' in {\em Proceedings of
  the eleventh ACM SIGKDD international conference on Knowledge discovery in
  data mining}, pp.~177--187, 2005.

\bibitem{18ventrella2018modeling}
A.~V. Ventrella, G.~Piro, and L.~A. Grieco, ``On modeling shortest path length
  distribution in scale-free network topologies,'' {\em IEEE Systems Journal},
  vol.~12, no.~4, pp.~3869--3872, 2018.

\bibitem{65torra2019network}
V.~Torra and Y.~Narukawa, ``On network analysis using non-additive integrals:
  extending the game-theoretic network centrality,'' {\em Soft Computing},
  vol.~23, pp.~2321--2329, 2019.

\bibitem{38sun2020did}
X.~Sun, S.~Wandelt, and A.~Zhang, ``How did covid-19 impact air transportation?
  a first peek through the lens of complex networks,'' {\em Journal of Air
  Transport Management}, vol.~89, p.~101928, 2020.

\bibitem{51jo2021social}
W.~Jo, D.~Chang, M.~You, and G.-H. Ghim, ``A social network analysis of the
  spread of covid-19 in south korea and policy implications,'' {\em Scientific
  Reports}, vol.~11, no.~1, p.~8581, 2021.

\bibitem{14zhao2022random}
J.~Zhao, T.~Wen, H.~Jahanshahi, and K.~H. Cheong, ``The random walk-based
  gravity model to identify influential nodes in complex networks,'' {\em
  Information Sciences}, 2022.

\bibitem{34wen2020vital}
T.~Wen, D.~Pelusi, and Y.~Deng, ``Vital spreaders identification in complex
  networks with multi-local dimension,'' {\em Knowledge-Based Systems},
  vol.~195, p.~105717, 2020.

\bibitem{40wang2020identifying}
M.~Wang, W.~Li, Y.~Guo, X.~Peng, and Y.~Li, ``Identifying influential spreaders
  in complex networks based on improved k-shell method,'' {\em Physica A:
  Statistical Mechanics and its Applications}, vol.~554, p.~124229, 2020.

\bibitem{52chen2020time}
Y.-C. Chen, P.-E. Lu, C.-S. Chang, and T.-H. Liu, ``A time-dependent sir model
  for covid-19 with undetectable infected persons,'' {\em Ieee transactions on
  network science and engineering}, vol.~7, no.~4, pp.~3279--3294, 2020.

\bibitem{54zhang2019flow}
J.~Zhang, Y.~Zheng, J.~Sun, and D.~Qi, ``Flow prediction in spatio-temporal
  networks based on multitask deep learning,'' {\em IEEE Transactions on
  Knowledge and Data Engineering}, vol.~32, no.~3, pp.~468--478, 2019.

\bibitem{55song2020spatial}
C.~Song, Y.~Lin, S.~Guo, and H.~Wan, ``Spatial-temporal synchronous graph
  convolutional networks: A new framework for spatial-temporal network data
  forecasting,'' in {\em Proceedings of the AAAI conference on artificial
  intelligence}, vol.~34, pp.~914--921, 2020.

\bibitem{56tang2020predictability}
D.~Tang, W.~Du, L.~Shekhtman, Y.~Wang, S.~Havlin, X.~Cao, and G.~Yan,
  ``Predictability of real temporal networks,'' {\em National Science Review},
  vol.~7, no.~5, pp.~929--937, 2020.

\bibitem{4andreatta1988stochastic}
G.~Andreatta and L.~Romeo, ``Stochastic shortest paths with recourse,'' {\em
  Networks}, vol.~18, no.~3, pp.~193--204, 1988.
  
  \bibitem{62pedreschi2022temporal}
N.~Pedreschi, D.~Battaglia, and A.~Barrat, ``The temporal rich club
  phenomenon,'' {\em Nature Physics}, vol.~18, no.~8, pp.~931--938, 2022.


\bibitem{1smith2024temporal}
J.~Smith and R.~Doe, ``Temporal Dynamics in Social Networks: A Comprehensive Review,'' in {\em Journal of Network Theory}, vol. 12, no. 3, pp.~223--245, 2024.

\bibitem{2kim2024modeling}
A.~Kim and W.~Chen, ``Modeling Epidemic Spread in Time-Varying Networks: A Discrete-Time Approach,'' in {\em International Journal of Network Science}, vol. 8, no. 1, pp.~45--67, 2024.

\bibitem{3garcia2025leveraging}
M.~Garcia and N.~Thompson, ``Leveraging Temporal Networks for Enhanced Cybersecurity Protocols,'' in {\em Journal of Cybersecurity Research}, vol. 15, no. 4, pp.~56--78, 2025.

\bibitem{4alvarez2025understanding}
T.~Alvarez and S.~Patel, ``Understanding Resilience in Temporal Transportation Networks,'' in {\em Transportation Research Part A}, vol. 145, pp.~123--136, 2025.

\bibitem{5li2024exploring}
Z.~Li and V.~Kumar, ``Exploring the Interplay between Network Structure and Temporal Patterns in Influencer Dynamics,'' in {\em Social Network Analysis and Mining}, vol. 14, no. 2, pp.~87--102, 2024.


\bibitem{28barabasi2013network}
A.-L. Barab{\'a}si, ``Network science,'' {\em Philosophical Transactions of the
  Royal Society A: Mathematical, Physical and Engineering Sciences}, vol.~371,
  no.~1987, p.~20120375, 2013.

\bibitem{58newman2001random}
M.~E. Newman, S.~H. Strogatz, and D.~J. Watts, ``Random graphs with arbitrary
  degree distributions and their applications,'' {\em Physical review E},
  vol.~64, no.~2, p.~026118, 2001.

\bibitem{57bollobas2004diameter}
B.~Bollob{\'a}s* and O.~Riordan, ``The diameter of a scale-free random graph,''
  {\em Combinatorica}, vol.~24, no.~1, pp.~5--34, 2004.

\bibitem{46journal.pone.0107878}
J.~Fournet and A.~Barrat, ``Contact patterns among high school students,'' {\em
  PLoS ONE}, vol.~9, p.~e107878, 09 2014.

\bibitem{47vanhems2013estimating}
P.~Vanhems, A.~Barrat, C.~Cattuto, J.-F. Pinton, N.~Khanafer, C.~R{\'e}gis,
  B.-a. Kim, B.~Comte, and N.~Voirin, ``Estimating potential infection
  transmission routes in hospital wards using wearable proximity sensors,''
  {\em PloS one}, vol.~8, no.~9, p.~e73970, 2013.

\bibitem{45Genois2018}
M.~G{'e}nois and A.~Barrat, ``Can co-location be used as a proxy for
  face-to-face contacts?,'' {\em EPJ Data Science}, vol.~7, p.~11, May 2018.

\bibitem{44/journal.pone.0011596}
C.~Cattuto, W.~{Van den Broeck}, A.~Barrat, V.~Colizza, J.~Pinton, and
  A.~Vespignani, ``Dynamics of person-to-person interactions from distributed
  rfid sensor networks,'' {\em PLOS ONE}, vol.~5, p.~e11596, 07 2010.


\end{thebibliography}

\section{Appendix}\label{appendix}
The algorithm for Dynamic simulation

\begin{algorithm}
\caption{Simulation of procedure to find diameter in teporal networks}\label{alg:capDynamic}
\begin{algorithmic}[1]
\Procedure{FindDiameterDynamic}{}
\State \textbf{visited\underline{\space}step}: The step at which the flow reaches the node
\State $\textbf{step} \gets 0$: counter of steps
\State \textbf{visited} [\space] : an array of visited nodes
\State\textbf{active\underline{\space}steps} [\space,\space] $_{EdgeCount\times MaximumSteps}$: 2d binary array
defining the steps a link is active. Each row is generated based on the main distribution.
\State $\textbf{diameter} \gets 0$
\State \textbf{reachableSet}[\space]: an array of nodes visited in each step. Reachable set of that step
\State \textbf{temp\underline{\space}reachableSet}[\space]: temporary array to keep the newly visited nodes

\While{$legth(visited) < NodeCount$}
    \State diameter += 1
   \For{$v_i$ in reachableSet}
        \For {$ne_j$ in $v_i$.neighbors()}
            \State edgeId = edge($v_i$,$ne_j$)
            \If {active\underline{\space}steps[edgeId,step] == 1 \textbf{and} $v_i.visitedStep$ $<$ step \textbf{and}
            $ne_j$ \textbf{not in} visited}
                \State visited.add($ne_j$)
                \State $ne_j$.visitedStep = step
                \State temp\underline{\space}reachableSet.add($ne_j$)

            \EndIf
        \EndFor
   \EndFor
   \State visited += reachableSet
   \State reachableSet = temp\underline{\space}reachableSet
   \State step +=1

\EndWhile
\State \textbf{return} diameter
\EndProcedure
\end{algorithmic}
\end{algorithm}
\end{document}